\newtheorem{theorem}{Theorem}
\newtheorem{corollary}{Corollary}
\newtheorem{remark}{Remark}
\newenvironment{proof}[1][Proof.]{\begin{trivlist}
\item[\hskip \labelsep {\bfseries #1}]}{\end{trivlist}}
\newcommand{\AmS}{{\protect\the\textfont2
  A\kern-.1667em\lower.5ex\hbox{M}\kern-.125emS}}
\title{Estimates for the number of vertices with an interval spectrum in proper edge colorings of some graphs}
\author{R.R. Kamalian\address[MCSD]{Institute for Informatics and Automation Problems, National Academy
of Sciences of RA, 0014 Yerevan, Republic of Armenia}%
\thanks{email: rrkamalian@yahoo.com}}
\begin{document}

\maketitle

\begin{abstract}
A proper edge $t$-coloring of a graph $G$ is a
coloring of edges of $G$ with colors $1,2,...,t$ such that each of
$t$ colors is used, and adjacent edges are colored differently. The
set of colors of edges incident with a vertex $x$ of $G$ is called a
spectrum of $x$. A proper edge $t$-coloring of a graph $G$ is
interval for its vertex $x$ if the spectrum of $x$ is an interval of
integers. A proper edge $t$-coloring of a graph $G$ is
persistent-interval for its vertex $x$ if the spectrum of $x$ is an
interval of integers beginning from the color $1$.

For graphs $G$ from some classes of graphs, we obtain estimates for
the possible number of vertices for which a proper edge $t$-coloring
of $G$ can be interval or persistent-interval.
\bigskip
\end{abstract}

\section{Introduction}

We consider undirected, simple, finite, connected graphs. For a
graph $G$, we denote by $V(G)$ and $E(G)$ the sets of its vertices
and edges, respectively. For any $x\in V(G)$, $d_{G}(x)$ denotes the
degree of the vertex $x$ in $G$. For a graph $G$, we denote by
$\Delta (G)$ the maximum degree of a vertex of $G$. A function
$\varphi:E(G)\rightarrow\{1,2,\ldots,t\}$ is called a proper edge
$t$-coloring of a graph $G$ if each of $t$ colors is used, and
adjacent edges are colored differently. The set of all proper edge
$t$-colorings of $G$ is denoted by $\alpha(G,t).$ The minimum value
of $t$ for which there exists a proper edge $t$-coloring of a graph
$G$ is called a chromatic index \cite{Vizing2} of $G$ and is denoted
by $\chi'(G).$ Let us also define the set $\alpha(G)$ of all proper
edge colorings of the graph $G$
$$\alpha(G)\equiv \bigcup_{t=\chi'(G)}^{|E(G)|}\alpha(G,t).$$

If $G$ is a graph, $\varphi\in\alpha(G)$, $x\in V(G)$, then the set
of colors of edges of $G$ incident with $x$ is called a spectrum of
the vertex $x$ in the coloring $\varphi$ of the graph $G$ and is
denoted by $S_G(x,\varphi).$

An arbitrary nonempty subset of consecutive integers is called an
interval. An interval with the minimum element $p$ and the maximum
element $q$ is denoted by $[p,q]$. An interval $D$ is called a
$h$-interval if $|D|=h$.

For any real number $\xi$, we denote by $\lfloor\xi \rfloor$
($\lceil\xi \rceil$) the maximum (minimum) integer which is less
(greater) than or equal to $\xi$.

If $G$ is a graph, $\varphi\in\alpha(G)$, and $x\in V(G)$, then we
say that $\varphi$ is interval (persistent-interval) for $x$ if
$S_G(x,\varphi)$ is a $d_G(x)$-interval (a $d_G(x)$-interval with
$1$ as its minimum element). For an arbitrary graph $G$ and any
$\varphi\in\alpha(G)$, we denote by
$f_{G,i}(\varphi)(f_{G,pi}(\varphi))$ the number of vertices of the
graph $G$ for which $\varphi$ is interval (persistent-interval). For
any graph $G$, let us \cite{Shved2_} set
$$\eta_i(G)\equiv \max_{\varphi\in\alpha(G)}f_{G,i}(\varphi),\quad
\eta_{pi}(G)\equiv \max_{\varphi\in\alpha(G)}f_{G,pi}(\varphi).$$

A bipartite graph $G$ with bipartition $(X,Y)$ is called
$(a,b)$-biregular, if $d_G(x)=a$ for any vertex $x\in X$, and
$d_G(y)=b$ for any vertex $y\in Y$.

The terms and concepts that we do not define can be found in
\cite{West1}.

It is clear that if for any graph $G$ $\eta_{pi}(G)=|V(G)|$, then
$\chi'(G)=\Delta(G)$. For a regular graph $G$, these two conditions
are equivalent: $\eta_{pi}(G)=|V(G)|$ iff $\chi'(G)=\Delta(G)$. It
is known \cite{Holyer3,Leven} that for a regular graph $G$, the
problem of deciding whether $\chi'(G)=\Delta(G)$ or not is
$NP$-complete. It means that for a regular graph $G$, the problem of
deciding whether $\eta_{pi}(G)=|V(G)|$ or not is also $NP$-complete.
For any tree $G$, some necessary and sufficient condition for
$\eta_{pi}(G)=|V(G)|$ was obtained in \cite{Caro5}. In this paper,
for an arbitrary regular graph $G$, we obtain a lower bound for the
parameter $\eta_{pi}(G)$.

If $G$ is a graph, $R_0\subseteq V(G)$, and the coloring
$\varphi\in\alpha(G)$ is interval (persistent-interval) for any
$x\in R_0$, then we say that $\varphi$ is interval
(persistent-interval) on $R_0$.

$\varphi\in\alpha(G)$ is called an interval coloring of a graph $G$
if $\varphi$ is interval on $V(G)$.

We define the set $\mathfrak{N}$ as the set of all graphs for which
there is an interval coloring. Clearly, for any graph $G$,
$G\in\mathfrak{N}$ if and only if $\eta_i(G)=|V(G)|$.

The notion of an interval coloring was introduced in \cite{Oranj6}.
In \cite{Oranj6,Diss7,Asratian8} it is shown that if
$G\in\mathfrak{N}$, then $\chi'(G)=\Delta(G)$. For a regular graph
$G$, these two conditions are equivalent: $G\in\mathfrak{N}$ iff
$\chi'(G)=\Delta(G)$ \cite{Oranj6,Diss7,Asratian8}. Consequently,
for a regular graph $G$, four conditions are equivalent:
$G\in\mathfrak{N},$ $\chi'(G)=\Delta(G),$ $\eta_i(G)=|V(G)|,$
$\eta_{pi}(G)=|V(G)|$. It means that for any regular graph $G$,
\begin{enumerate}
\item the problem of deciding whether or not $G$ has an interval
coloring is $NP$-complete,
\item the problem of deciding whether
$\eta_i(G)=|V(G)|$ or not is $NP$-complete.
\end{enumerate}

In this paper, for an arbitrary regular graph $G$, we obtain a lower
bound for the parameter $\eta_i(G)$.

We also obtain some results for bipartite graphs. The complexity of
the problem of existence of an interval coloring for bipartite
graphs is investigated in \cite{Asratian9,Giaro,Sev}. In
\cite{Diss7} it is shown that for a bipartite graph $G$ with
bipartition $(X,Y)$ and $\Delta(G)=3$ the problem of existence of a
proper edge $3$-coloring which is persistent-interval on $X\cup Y$
(or even only on $Y$ \cite{Oranj6,Diss7}) is $NP$-complete.

Suppose that $G$ is an arbitrary bipartite graph with bipartition
$(X,Y)$. Then $\eta_i(G)\geq\max\{|X|,|Y|\}$.

Suppose that $G$ is a bipartite graph with bipartition $(X,Y)$ for
which there exists a coloring $\varphi\in\alpha(G)$
persistent-interval on $Y$. Then $\eta_{pi}(G)\geq 1+|Y|$.

Some attention is devoted to $(a,b)$-biregular bipartite graphs
\cite{Asratian10,Hanson11,Hanson12,Kost13} in the case $b=a+1$.

We show that if $G$ is a $(k-1,k)$-biregular bipartite graph, $k\geq
4$, then
$$\eta_i(G)\geq\frac{k-1}{2k-1}\cdot|V(G)|+\Bigg\lceil\frac{k}{\big\lceil\frac{k}{2}\big\rceil\cdot(2k-1)}\cdot|V(G)|\Bigg\rceil.$$

We show that if $G$ is a $(k-1,k)$-biregular bipartite graph, $k\geq
3$, then
$$\eta_{pi}(G)\geq\frac{k}{2k-1}\cdot|V(G)|.$$

\section{Results}

\begin{theorem}\label{Theorem1} \cite{Shved2_}
If $G$ is a regular graph with $\chi'(G)=1+\Delta(G)$, then
$$\eta_{pi}(G)\geq\Bigg\lceil\frac{|V(G)|}{1+\Delta(G)}\Bigg\rceil.$$
\end{theorem}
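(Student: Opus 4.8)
The plan is to exhibit a single coloring $\varphi \in \alpha(G, 1+\Delta(G))$ that is persistent-interval on a large set of vertices, and then to lower-bound the size of that set by a counting argument. Since $G$ is regular of degree $\Delta \equiv \Delta(G)$ and $\chi'(G) = 1+\Delta$, by Vizing's theorem $G$ is of class two, and there exists a proper edge coloring using exactly the colors $1, 2, \ldots, 1+\Delta$. Fix any such $\varphi$. For a vertex $x$, the spectrum $S_G(x,\varphi)$ is a $\Delta$-element subset of the $(1+\Delta)$-element set $\{1,\ldots,1+\Delta\}$, so exactly one color, call it $c(x) \in \{1,\ldots,1+\Delta\}$, is \emph{missing} at $x$. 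The coloring $\varphi$ is persistent-interval for $x$ precisely when $S_G(x,\varphi) = [1,\Delta]$, i.e.\ when $c(x) = 1+\Delta$; and more generally $S_G(x,\varphi)$ is an interval iff the missing color $c(x)$ is either $1$ or $1+\Delta$ (a $\Delta$-subset of $[1,1+\Delta]$ is an interval iff it omits an endpoint).

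The key step is to improve a given coloring to one in which the color $1+\Delta$ is missing at as many vertices as possible. The natural device is a Kempe-chain / color-permutation argument on the color classes: the edges colored with any two fixed colors $i \neq j$ form a union of paths and even cycles, and swapping $i \leftrightarrow j$ along any one such component produces another coloring in $\alpha(G, 1+\Delta)$. I would track, for a coloring $\varphi$, the quantity $N(\varphi) \equiv |\{x \in V(G) : c(x) = 1+\Delta\}|$, and argue that if $\varphi$ does not already make $N$ large then some component swap strictly increases it (or an analogous potential), so that a maximizing coloring over the finite set $\alpha(G,1+\Delta)$ attains the claimed bound. Alternatively — and this is probably cleaner — observe that by a suitable permutation $\pi$ of the colors $\{1,\ldots,1+\Delta\}$ applied globally (replace each edge color $c$ by $\pi(c)$), we may move whichever color is "most frequently missing" into the slot $1+\Delta$. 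Since the missing colors $c(x)$, as $x$ ranges over $V(G)$, take values in a set of size $1+\Delta$, by pigeonhole some color value is missing at at least $\lceil |V(G)|/(1+\Delta)\rceil$ vertices; renaming that color to $1+\Delta$ gives a coloring persistent-interval on at least that many vertices, which is exactly the desired estimate.

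Thus the skeleton is: (i) invoke $\chi'(G) = 1+\Delta$ to get $\varphi \in \alpha(G, 1+\Delta)$; (ii) define the missing-color function $c : V(G) \to \{1,\ldots,1+\Delta\}$, valid because $G$ is regular and the palette has exactly one more color than every degree; (iii) apply pigeonhole to find a color $c_0$ with $|c^{-1}(c_0)| \geq \lceil |V(G)|/(1+\Delta)\rceil$; (iv) compose $\varphi$ with the transposition of $c_0$ and $1+\Delta$ to obtain $\varphi' \in \alpha(G, 1+\Delta)$ with $c^{-1}(1+\Delta)$ of size at least $\lceil |V(G)|/(1+\Delta)\rceil$ under $\varphi'$; (v) note that for each such $x$, $S_G(x,\varphi') = \{1,\ldots,\Delta\} = [1,\Delta]$, so $\varphi'$ is persistent-interval for $x$, whence $f_{G,pi}(\varphi') \geq \lceil |V(G)|/(1+\Delta)\rceil$ and therefore $\eta_{pi}(G) \geq \lceil |V(G)|/(1+\Delta)\rceil$.

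The main obstacle is step (iv): one must check that relabeling colors by a permutation genuinely sends a proper edge coloring to a proper edge coloring using all $1+\Delta$ colors — this is immediate (adjacency constraints and surjectivity are permutation-invariant) — and, more substantively, that after the transposition the vertices formerly missing $c_0$ are exactly the vertices now missing $1+\Delta$, so that their spectra become the initial interval $[1,\Delta]$. This is where regularity is essential: it guarantees every vertex has degree exactly $\Delta$, so that missing \emph{some} color forces the spectrum to be the complement of a single color, and omitting $1+\Delta$ forces the spectrum to be precisely $[1,\Delta]$ rather than merely some $\Delta$-subset. No delicate recoloring or chain-chasing is actually needed; the only care required is the bookkeeping that the pigeonhole bound survives the ceiling.
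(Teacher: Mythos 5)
Your proposal is correct and follows essentially the same route as the paper's own proof: partition $V(G)$ by the unique missing color at each vertex (using regularity and the palette of size $1+\Delta(G)$), apply pigeonhole to find a color missing at at least $\lceil |V(G)|/(1+\Delta(G))\rceil$ vertices, and transpose that color with $1+\Delta(G)$ to make those spectra equal to $[1,\Delta(G)]$. The Kempe-chain variant you sketch first is unnecessary, as you yourself note; the global transposition is exactly the paper's Case 2.
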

\begin{proof}
Suppose that $\beta\in\alpha(G,1+\Delta(G))$. For any
$j\in[1,1+\Delta(G)]$, define
$$V_{G,\beta,j}\equiv\{x\in V(G)/ j\not\in S_G(x,\beta)\}.$$

For arbitrary integers $j',j'',$ where $1\leq j'< j''\leq
1+\Delta(G),$ we have
$$V_{G,\beta,j'}\cap V_{G,\beta,j''}=\emptyset$$
and $$ \bigcup_{j=1}^{1+\Delta(G)}V_{G,\beta,j}=V(G).$$

Hence, there exists $j_0\in [1,1+\Delta(G)]$ for which
$$|V_{G,\beta,j_0}|\geq\Bigg\lceil\frac{|V(G)|}{1+\Delta(G)}\Bigg\rceil.$$

Set $R_0\equiv V_{G,\beta,j_0}.$

\case{1}$j_0=1+\Delta(G).$

Clearly, $\beta$ is persistent-interval on $R_0$.

\case{2} $j_0\in[1,\Delta(G)].$

Define a function $\varphi:E(G)\rightarrow [1,1+\Delta(G)]$. For any
$e\in E(G),$ set:
$$
\varphi(e)\equiv\left\{
\begin{array}{ll}
\beta(e), & \textrm{if $\beta(e)\not\in \{j_0,1+\Delta(G)\}$}\\
j_0, & \textrm{if $\beta(e)=1+\Delta(G)$}\\
1+\Delta(G), & \textrm{if $\beta(e)=j_0$.}
\end{array}
\right.
$$

It is not difficult to see that $\varphi\in\alpha(G,1+\Delta(G))$
and $\varphi$ is persistent-interval on $R_0$.

\end{proof}

\begin{corollary}\label{Corollary1} \cite{Shved2_}
If $G$ is a cubic graph, then there exists a coloring from
$\alpha(G,\chi'(G))$ which is persistent-interval for at least
$\Big\lceil\frac{|V(G)|}{4}\Big\rceil$ vertices of $G$.
\end{corollary}

\begin{theorem}\label{Theorem2} \cite{Shved2_}
If $G$ is a regular graph with $\chi'(G)=1+\Delta(G)$, then
$$\eta_i(G)\geq\Bigg\lceil\frac{|V(G)|}{\big\lceil\frac{1+\Delta(G)}{2}\big\rceil}\Bigg\rceil.$$
\end{theorem}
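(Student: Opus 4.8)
The plan is to mimic the structure of the proof of Theorem~\ref{Theorem1}, but instead of deleting a single missing color at each vertex, I pair up the $1+\Delta(G)$ colors into roughly $\big\lceil\frac{1+\Delta(G)}{2}\big\rceil$ blocks, each block being an interval of length at most $2$, and then recolor so that the vertices missing a given block get an interval spectrum. Concretely, fix $\beta\in\alpha(G,1+\Delta(G))$ and partition $[1,1+\Delta(G)]$ into consecutive pairs $\{1,2\},\{3,4\},\dots$ (with a singleton block at the end if $1+\Delta(G)$ is odd); call these blocks $B_1,\dots,B_m$ with $m=\big\lceil\frac{1+\Delta(G)}{2}\big\rceil$. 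For each block $B_\ell$ set
$$
W_{G,\beta,\ell}\equiv\{x\in V(G)\ /\ B_\ell\cap S_G(x,\beta)=\emptyset\}.
$$
Since $\beta$ is a proper edge coloring of a $\Delta(G)$-regular graph, each vertex misses exactly one color, so each $x$ lies in exactly one $W_{G,\beta,\ell}$; hence the $W_{G,\beta,\ell}$ partition $V(G)$ and some block $B_{\ell_0}$ satisfies $|W_{G,\beta,\ell_0}|\geq\big\lceil\frac{|V(G)|}{m}\big\rceil$. Put $R_0\equiv W_{G,\beta,\ell_0}$.

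Next I would produce a coloring $\varphi\in\alpha(G,1+\Delta(G))$ that is interval on $R_0$. The idea is to apply a color permutation $\pi$ of $[1,1+\Delta(G)]$ that carries $B_{\ell_0}$ to the top block $\{\Delta(G),1+\Delta(G)\}$ (or to $\{1+\Delta(G)\}$ in the odd boundary case), and then set $\varphi(e)\equiv\pi(\beta(e))$. Because a permutation of colors sends any proper edge $t$-coloring to another, $\varphi\in\alpha(G,1+\Delta(G))$. A vertex $x\in R_0$ misses its unique color inside $B_{\ell_0}$ and is incident with edges of every other color; after applying $\pi$, the spectrum $S_G(x,\varphi)$ is $[1,1+\Delta(G)]$ with one element of the top block removed, i.e.\ either $[1,\Delta(G)]$ or $[1,1+\Delta(G)]\setminus\{\Delta(G)\}=[1,\Delta(G)-1]\cup\{1+\Delta(G)\}$. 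The first is already a $\Delta(G)$-interval; the second is not an interval, so in that subcase I additionally need the involution that swaps the two colors $\Delta(G)$ and $1+\Delta(G)$ — applied only as part of choosing $\pi$ — so that the missing color inside the top block is always the largest one, guaranteeing $S_G(x,\varphi)=[1,\Delta(G)]$. Either way $\varphi$ is interval on $R_0$, giving $\eta_i(G)\geq|R_0|\geq\big\lceil\frac{|V(G)|}{\lceil(1+\Delta(G))/2\rceil}\big\rceil$.

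The one genuinely delicate point is that making the missing color the \emph{top} color of its block does not by itself settle things, because a vertex in $R_0$ might miss, say, color $\Delta(G)$ while still being incident with an edge colored $1+\Delta(G)$; the spectrum is then $[1,\Delta(G)+1]\setminus\{\Delta(G)\}$, which has a gap. The fix is to choose $\pi$ per vertex? No — $\pi$ must be global. So the honest argument is: first reorder colors so $B_{\ell_0}$ becomes $\{\Delta(G),1+\Delta(G)\}$; every $x\in R_0$ now misses \emph{either} $\Delta(G)$ \emph{or} $1+\Delta(G)$. Those missing $1+\Delta(G)$ already have spectrum $[1,\Delta(G)]$. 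For those missing $\Delta(G)$, their spectrum is $[1,1+\Delta(G)]\setminus\{\Delta(G)\}$ — handle this exactly as Case~2 of Theorem~\ref{Theorem1}: this is the non-interval case only when $1+\Delta(G)$ is actually present, i.e.\ always for such $x$, so apply the transposition of $\Delta(G)$ and $1+\Delta(G)$ — but that would break the first group. The clean resolution is to observe that in a $\Delta(G)$-regular graph a vertex is incident with $\Delta(G)$ edges, so it is incident with an edge of color $1+\Delta(G)$ iff it misses exactly one of $\{1,\dots,\Delta(G)\}$; thus after the reordering, $R_0$ splits as $R_0=R'\cup R''$ with $R'$ (missing $1+\Delta(G)$) already interval, and $R''$ (missing $\Delta(G)$) is made interval by the swap $\Delta(G)\leftrightarrow 1+\Delta(G)$, which \emph{preserves} intervality on $R'$ since those vertices have full spectrum on $\{1,\dots,\Delta(G)\}$ and no edge of color $1+\Delta(G)$, so swapping does nothing to them. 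Hence a single further transposition fixes everyone, and $\varphi$ is interval on all of $R_0$. This splitting-and-repair step is the main obstacle; once it is set up correctly the bound follows immediately.
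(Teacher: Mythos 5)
Your setup (pairing the $1+\Delta(G)$ colors into $\big\lceil\frac{1+\Delta(G)}{2}\big\rceil$ blocks, partitioning $V(G)$ according to which block contains the unique missing color, and picking a popular block) is exactly the paper's decomposition, and it is fine. The gap is in your repair step. After you permute colors so that the chosen block becomes $\{\Delta(G),1+\Delta(G)\}$, you split $R_0=R'\cup R''$ and claim that the transposition $\Delta(G)\leftrightarrow 1+\Delta(G)$ fixes $R''$ while ``doing nothing'' to $R'$ because vertices of $R'$ have no edge of color $1+\Delta(G)$. That last claim is false: a vertex $x\in R'$ has spectrum $[1,\Delta(G)]$, so it \emph{is} incident with an edge colored $\Delta(G)$; the swap recolors that edge with $1+\Delta(G)$, and the new spectrum of $x$ is $[1,\Delta(G)-1]\cup\{1+\Delta(G)\}$, which is not an interval. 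So the single transposition trades one half of $R_0$ for the other (it fixes $R''$ and breaks $R'$), and your argument does not produce a coloring interval on all of $R_0$ — which is the ``genuinely delicate point'' you flagged but did not actually resolve.

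The correct resolution, and what the paper does, is to aim the chosen pair of colors not at the top two colors $\{\Delta(G),1+\Delta(G)\}$ but at the two \emph{ends} of the color range, $\{1+\Delta(G),1\}$. Concretely, for a block $\{2i_0-1,2i_0\}$ the paper applies the cyclic shift $\beta(e)\mapsto(\beta(e)+2+\Delta(G)-2i_0)\pmod{1+\Delta(G)}$ (with the value $1+\Delta(G)$ in place of $0$), which sends $2i_0-1\mapsto 1+\Delta(G)$ and $2i_0\mapsto 1$. A vertex missing $2i_0-1$ then misses $1+\Delta(G)$ and has spectrum $[1,\Delta(G)]$, while a vertex missing $2i_0$ then misses $1$ and has spectrum $[2,1+\Delta(G)]$; both are intervals, because the theorem only asks for interval, not persistent-interval, spectra. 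So one global cyclic permutation handles both halves of $R_0$ simultaneously, with no second repair pass needed (and the boundary cases --- the singleton block $\{1+\Delta(G)\}$ when $\Delta(G)$ is even, the last pair when $\Delta(G)$ is odd --- are handled by the identity and by the shift by $1$, respectively). With this replacement of your permutation, the rest of your argument goes through and yields $\eta_i(G)\geq\Big\lceil\frac{|V(G)|}{\lceil(1+\Delta(G))/2\rceil}\Big\rceil$.
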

\begin{proof}
Suppose that $\beta\in\alpha(G,1+\Delta(G))$. For any
$j\in[1,1+\Delta(G)]$, define
$$V_{G,\beta,j}\equiv\{x\in V(G)/ j\not\in S_G(x,\beta)\}.$$

For arbitrary integers $j',j'',$ where $1\leq j'< j''\leq
1+\Delta(G),$ we have
$$V_{G,\beta,j'}\cap V_{G,\beta,j''}=\emptyset $$
and
$$ \bigcup_{j=1}^{1+\Delta(G)}V_{G,\beta,j}=V(G).$$

For any $i\in[1,\big\lceil\frac{1+\Delta(G)}{2}\big\rceil]$, let us
define the subset $V(G,\beta,i)$ of the set $V(G)$ as follows:
$$
V(G,\beta,i)\equiv\left\{
\begin{array}{ll}
V_{G,\beta,2i-1}\cup V_{G,\beta,2i}, & \textrm{if $\Delta(G)$ is odd and $i\in[1,\frac{1+\Delta(G)}{2}]$}\\
 & \textrm{or $\Delta(G)$ is even and $i\in[1,\frac{\Delta(G)}{2}]$,}\\
V_{G,\beta,1+\Delta(G)}, & \textrm{if $\Delta(G)$ is even and
$i=1+\frac{\Delta(G)}{2}$.}\\
\end{array}
\right.
$$

For arbitrary integers $i',i'',$ where $1\leq i'< i''\leq
\big\lceil\frac{1+\Delta(G)}{2}\big\rceil,$ we have
$$V(G,\beta,i')\cap V(G,\beta,i'')=\emptyset$$
and
$$\bigcup_{i=1}^{\big\lceil\frac{1+\Delta(G)}{2}\big\rceil}V(G,\beta,i)=V(G).$$

Hence, there exists $i_0\in
\big[1,\big\lceil\frac{1+\Delta(G)}{2}\big\rceil\big]$ for which
$$|V(G,\beta,i_0)|\geq\Bigg\lceil\frac{|V(G)|}{\big\lceil\frac{1+\Delta(G)}{2}\big\rceil}
\Bigg\rceil.$$

Set $R_0\equiv V(G,\beta,i_0).$

\case{1} $i_0=\big\lceil\frac{1+\Delta(G)}{2}\big\rceil.$

\case{1.a} $\Delta(G)$ is even.

Clearly, $\beta$ is interval on $R_0$.

\case{1.b} $\Delta(G)$ is odd.

Define a function $\varphi:E(G)\rightarrow [1,1+\Delta(G)]$. For any
$e\in E(G),$ set:
$$
\varphi(e)\equiv\left\{
\begin{array}{ll}
(\beta(e)+1)(\bmod{(1+\Delta(G))}), & \textrm{if $\beta(e)\neq\Delta(G)$,}\\
1+\Delta(G), & \textrm{if $\beta(e)=\Delta(G)$.}\\
\end{array}
\right.
$$

It is not difficult to see that $\varphi\in\alpha(G,1+\Delta(G))$
and $\varphi$ is interval on $R_0$.

\case{2} $1\leq i_0\leq\big\lceil\frac{\Delta(G)-1}{2}\big\rceil.$

Define a function $\varphi:E(G)\rightarrow [1,1+\Delta(G)]$. For any
$e\in E(G),$ set:
$$
\varphi(e)\equiv\left\{
\begin{array}{ll}
(\beta(e)+2+\Delta(G)-2i_0)(\bmod{(1+\Delta(G))}), & \textrm{if $\beta(e)\neq 2i_0-1$,}\\
1+\Delta(G), & \textrm{if $\beta(e)=2i_0-1$.}\\
\end{array}
\right.
$$

It is not difficult to see that $\varphi\in\alpha(G,1+\Delta(G))$
and $\varphi$ is interval on $R_0$.
\end{proof}

\begin{corollary}\label{Corollary2} \cite{Shved2_}
If $G$ is a cubic graph, then there exists a coloring from
$\alpha(G,\chi'(G))$ which is interval for at least
$\frac{|V(G)|}{2}$ vertices of $G$.
\end{corollary}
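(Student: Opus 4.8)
The plan is to split on the value of the chromatic index. Since $G$ is cubic it is $3$-regular with $\Delta(G)=3$, so by Vizing's theorem $\chi'(G)\in\{3,4\}$, and exactly one of the two cases below occurs. In the first case $\chi'(G)=3=\Delta(G)$: take any $\varphi\in\alpha(G,3)=\alpha(G,\chi'(G))$. Every vertex of $G$ has degree $3$ and adjacent edges get distinct colors, so $S_G(x,\varphi)=\{1,2,3\}=[1,3]$ for every $x\in V(G)$; thus $\varphi$ is interval on all of $V(G)$, in particular for at least $\frac{|V(G)|}{2}$ vertices, which already gives the claim.

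In the second case $\chi'(G)=4=1+\Delta(G)$, so Theorem~\ref{Theorem2} applies to the regular graph $G$ and yields
$$\eta_i(G)\geq\Bigg\lceil\frac{|V(G)|}{\big\lceil\frac{1+\Delta(G)}{2}\big\rceil}\Bigg\rceil=\Bigg\lceil\frac{|V(G)|}{2}\Bigg\rceil\geq\frac{|V(G)|}{2}.$$
By definition of $\eta_i$ there is $\varphi\in\alpha(G)$ with $f_{G,i}(\varphi)\geq\lceil|V(G)|/2\rceil$, so $\varphi$ is interval for at least $\frac{|V(G)|}{2}$ vertices; it remains only to observe that this $\varphi$ lies in $\alpha(G,\chi'(G))$.

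For that last point I would appeal to the construction inside the proof of Theorem~\ref{Theorem2} rather than just its statement: there one starts from $\beta\in\alpha(G,1+\Delta(G))$ and the coloring $\varphi$ witnessing the bound is defined case by case (the identity, a cyclic shift, or a transposition of two colors) so that $\varphi\in\alpha(G,1+\Delta(G))$. With $\Delta(G)=3$ this means $\varphi\in\alpha(G,4)=\alpha(G,\chi'(G))$, exactly as required. I expect essentially no obstacle here — the only thing needing a moment's care is recording that the Theorem~\ref{Theorem2} coloring uses precisely $1+\Delta(G)$ colors; everything else is immediate from the case split and the fact that a proper edge $3$-coloring of a cubic graph is automatically interval at every vertex.
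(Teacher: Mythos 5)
Your proposal is correct and is exactly the intended derivation: the paper states Corollary~\ref{Corollary2} without proof as an immediate consequence of Theorem~\ref{Theorem2}, and your case split (the trivial case $\chi'(G)=3$, where any proper $3$-coloring is interval at every vertex, versus $\chi'(G)=4$, where Theorem~\ref{Theorem2} gives $\eta_i(G)\geq\lceil|V(G)|/2\rceil$ via a coloring in $\alpha(G,1+\Delta(G))=\alpha(G,\chi'(G))$) fills in that omitted argument in the same way. Your extra remark that the witnessing coloring in Theorem~\ref{Theorem2} indeed lies in $\alpha(G,1+\Delta(G))$ is a correct and worthwhile point of care, fully consistent with the construction in the paper's proof of that theorem.
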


\begin{theorem}\label{Theorem3}\cite{Oranj6,Diss7,Asratian8}
Let $G$ be a bipartite graph with bipartition $(X,Y)$. Then there
exists a coloring $\varphi\in\alpha(G,|E(G)|)$ which is interval on
$X$.
\end{theorem}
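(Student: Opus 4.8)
The plan is to color the edges of $G$ one vertex of $Y$ at a time, so that at each stage every already-colored vertex of $X$ sees an interval of colors starting from $1$. Specifically, order the vertices of $Y$ arbitrarily as $y_1,y_2,\ldots,y_{|Y|}$, and process them in this order. When we come to process $y_s$, its incident edges are exactly those joining $y_s$ to certain vertices of $X$; some of these $X$-neighbors may already have some incident edges colored (from earlier steps), while others are still completely uncolored. We will assign to the edges at $y_s$ a set of $d_G(y_s)$ pairwise distinct colors, never reusing at any $x\in X$ a color it has already received, and moreover \emph{extending upward} the interval at each touched $x$.

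The key observation making this work: if at the current stage a vertex $x\in X$ has spectrum exactly $[1,m_x]$ (an interval from $1$) with $m_x<d_G(x)$, and it still has an uncolored edge to the current $y_s$, then we may color that edge with color $m_x+1$, which keeps the spectrum of $x$ an interval from $1$. The potential conflict is that two different $X$-neighbors of $y_s$, say $x$ and $x'$, might demand the same next color $m_x+1=m_{x'}+1$; and the edge colors at $y_s$ must all be distinct. So first I would color, for each already-touched neighbor $x$ of $y_s$, the edge $xy_s$ with the smallest color not yet used at $x$ and not yet used among the edges already assigned at $y_s$ in this step — but to respect the interval-from-$1$ property at $x$ we cannot skip; the fix is to instead choose a \emph{free} color globally and then perform a Kempe-type recoloring, OR, more simply, to note that we are allowed to use up to $|E(G)|$ colors, so "room at the top" is never an issue for $X$-vertices that we are willing to leave non-interval.

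Re-examining the statement: we only need $\varphi$ interval on $X$, not on $Y$, and we may use as many as $|E(G)|$ colors. This suggests the cleanest route: build $\varphi$ so that for each $x\in X$ the spectrum is $[1,d_G(x)]$, and for vertices of $Y$ allow arbitrary (distinct) colors. Process $y_1,\ldots,y_{|Y|}$ in order; maintain for each $x\in X$ a counter $c(x)$ = number of edges at $x$ colored so far, with the invariant that $x$'s current spectrum is exactly $[1,c(x)]$. When processing $y_s$ with $X$-neighbors $x_1,\ldots,x_r$ (in some order), assign to $x_iy_s$ the color $c(x_i)+1$, then increment $c(x_i)$. The only thing to check is that the $r$ colors $c(x_1)+1,\ldots,c(x_r)+1$ used at $y_s$ are pairwise distinct — and this is the one point that can fail. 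To repair a clash, after tentatively assigning these colors, if two edges $x_iy_s$ and $x_jy_s$ got the same color $\gamma$, recolor one of them by a Kempe chain argument: swap colors along the maximal alternating path of colors $\{\gamma,\gamma'\}$ starting at $x_i$ for a suitable unused $\gamma'$; but since that would break the interval-from-$1$ property, the real resolution is to instead process the neighbors $x_1,\ldots,x_r$ of $y_s$ and, when $c(x_i)+1$ is already taken at $y_s$, bump $x_i$'s required color by recoloring \emph{previously colored} edges at $x_i$ — i.e.\ we shift the whole spectrum of $x_i$ up and then set $x_iy_s$ to the new top.

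Given the subtlety above, the honest main obstacle is exactly the simultaneous-demand collision at $y_s$, and the robust way to handle it is induction on $|Y|$ with an explicit recoloring step. So the plan is: (1) induct on $|Y|$; the base case $|Y|=0$ (or $|Y|=1$) is trivial. (2) Remove $y_{|Y|}$, apply the inductive hypothesis to $G-y_{|Y|}$ to get $\psi$ interval-from-$1$ on $X$; then (3) add back the edges at $y_{|Y|}$ one by one: to add edge $xy_{|Y|}$, if color $c_\psi(x)+1$ is free at $y_{|Y|}$, use it; otherwise it is used by some previously added edge $x'y_{|Y|}$ colored $c_\psi(x)+1$, and we recolor: increase every color in the spectrum of $x$ by $N$ for a large enough uniform shift $N$ — more cleanly, argue that since the number of available colors is $|E(G)|$, we can freely choose for $x$ a block $[1,d_G(x)]$ mapped to a fresh disjoint range, but that breaks starting-from-$1$, so instead we apply the color permutation trick exactly as in Theorem~\ref{Theorem1} and Theorem~\ref{Theorem2}: globally swap two color classes to make the needed color available at $y_{|Y|}$ without disturbing any $x\in X$ whose spectrum is an interval from $1$. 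Iterating over the (finitely many) neighbors of $y_{|Y|}$ completes the step, and the induction closes. I expect step (3), keeping all $X$-spectra intervals from $1$ while resolving collisions at $y_{|Y|}$, to be where all the real work lies; everything else is bookkeeping.
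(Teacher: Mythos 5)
Your approach has two genuine gaps, and it misses the simple structure forced by the hypothesis. First, membership in $\alpha(G,|E(G)|)$ requires, by the paper's definition, that each of the $|E(G)|$ colors be used; since $G$ has exactly $|E(G)|$ edges, such a coloring is a bijection between edges and colors, i.e. all edge colors are pairwise distinct. Your construction deliberately reuses small colors (each edge $xy_s$ receives $c(x)+1$), so even if it succeeded it would not produce a coloring lying in $\alpha(G,|E(G)|)$. Second, the invariant you maintain --- every $x\in X$ has spectrum exactly $[1,c(x)]$, i.e. the coloring is persistent-interval on $X$ --- is strictly stronger than what the theorem asks (interval on $X$) and is impossible in general: already for the path $x_1$--$y$--$x_2$ with $X=\{x_1,x_2\}$ and $Y=\{y\}$, the two edges are adjacent at $y$ and cannot both get color $1$, yet each $x_i$ has degree $1$ and would need spectrum $\{1\}$. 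Hence no Kempe-chain or recoloring step can rescue the invariant; moreover the specific repair you lean on (``globally swap two color classes without disturbing any $x\in X$ whose spectrum is an interval from $1$'') is incorrect as stated, since swapping the classes of colors $a$ and $b$ changes the spectrum of every vertex incident with exactly one of those two colors.

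What the hypothesis actually buys you is that in a bipartite graph every edge has exactly one endpoint in $X$, so the edge set is partitioned according to $X$-ends, and with $|E(G)|$ colors available the theorem is immediate: enumerate $X=\{x_1,\ldots,x_{|X|}\}$ and assign to the $d_G(x_i)$ edges incident with $x_i$ the colors $1+\sum_{j<i}d_G(x_j),\ldots,\sum_{j\leq i}d_G(x_j)$, one color per edge. This is a bijection from $E(G)$ onto $[1,|E(G)|]$, hence a proper edge $|E(G)|$-coloring in which every color is used, and the spectrum of each $x_i$ is the interval $\big[1+\sum_{j<i}d_G(x_j),\ \sum_{j\leq i}d_G(x_j)\big]$. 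No induction, collision resolution, or recoloring is needed; the per-vertex processing of $Y$ and the attempted anchoring of all spectra at color $1$ are exactly what create the difficulties your write-up acknowledges but does not resolve.
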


\begin{corollary}\label{Corollary3}
 Let $G$ be a bipartite graph with bipartition $(X,Y)$. Then
$\eta_i(G)\geq\max\{|X|,|Y|\}$.
\end{corollary}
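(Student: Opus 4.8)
The plan is to deduce this directly from Theorem~\ref{Theorem3} together with the obvious symmetry of a bipartition. First I would apply Theorem~\ref{Theorem3} to $G$ with the bipartition $(X,Y)$: this produces a coloring $\varphi_1\in\alpha(G,|E(G)|)$ which is interval on $X$. Since $\chi'(G)\leq|E(G)|$, we have $\alpha(G,|E(G)|)\subseteq\alpha(G)$, so $\varphi_1\in\alpha(G)$. Because $\varphi_1$ is interval for every $x\in X$, each such $x$ is counted by $f_{G,i}$, whence $f_{G,i}(\varphi_1)\geq|X|$; by the definition of $\eta_i$ as a maximum over $\alpha(G)$, this gives $\eta_i(G)\geq|X|$.

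Next I would note that $(Y,X)$ is likewise a bipartition of the same bipartite graph $G$, so Theorem~\ref{Theorem3} applies again with the two parts interchanged, yielding a coloring $\varphi_2\in\alpha(G)$ that is interval on $Y$; the identical argument gives $f_{G,i}(\varphi_2)\geq|Y|$ and hence $\eta_i(G)\geq|Y|$. Combining the two bounds, $\eta_i(G)\geq\max\{|X|,|Y|\}$, which is the claim.

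There is essentially no genuine obstacle here; the whole content of the corollary is packaged inside Theorem~\ref{Theorem3}. The only two things to verify are both immediate from the definitions: that $\alpha(G,|E(G)|)$ sits inside $\alpha(G)$, and that a coloring which is interval on a set $R_0$ forces at least $|R_0|$ vertices to have an interval (i.e.\ $d_G(\cdot)$-interval) spectrum. Neither requires any computation.
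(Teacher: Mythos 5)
Your proposal is correct and is exactly the argument the paper intends: Corollary~\ref{Corollary3} is stated as an immediate consequence of Theorem~\ref{Theorem3}, applied once to the bipartition $(X,Y)$ and once with the roles of $X$ and $Y$ interchanged, using that $\alpha(G,|E(G)|)\subseteq\alpha(G)$ and the definition of $\eta_i$. Nothing is missing.
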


\begin{theorem} \cite{Asratian14,Oranj6,Asratian8}
Let $G$ be a bipartite graph with bipartition $(X,Y)$ where
$d_G(x)\leq d_G(y)$ for each edge $(x,y)\in E(G)$ with $x\in X$ and
$y\in Y$. Then there exists a coloring
$\varphi_0\in\alpha(G,\Delta(G))$ which is persistent-interval on
$Y$.
\end{theorem}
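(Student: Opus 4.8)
The plan is to argue by induction on $|E(G)|$, at each step deleting one matching that will carry the largest colour $\Delta(G)$ and recolouring everything else, by the inductive hypothesis, with the colours $1,\dots,\Delta(G)-1$. Write $\Delta=\Delta(G)$, $X_\Delta=\{x\in X:d_G(x)=\Delta\}$ and $Y_\Delta=\{y\in Y:d_G(y)=\Delta\}$. Two immediate consequences of the hypothesis $d_G(x)\le d_G(y)$ will be used constantly: every neighbour of a vertex of $X_\Delta$ lies in $Y_\Delta$ (it has degree $\ge\Delta$); and $Y_\Delta\neq\emptyset$ whenever $E(G)\neq\emptyset$ (a vertex of maximum degree either already lies in $Y_\Delta$, or lies in $X_\Delta$ and then all of its neighbours do). If $\Delta\le 1$ we simply colour every edge $1$, so assume $\Delta\ge 2$.

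First I extract the matching. Since $G$ is bipartite it has a proper edge $\Delta$-colouring, and, as each vertex of degree $\Delta$ meets exactly one edge of every colour, every colour class is a matching saturating $X_\Delta\cup Y_\Delta$. Fix such a matching $M$ that is minimal, i.e.\ no proper subset of $M$ still saturates $X_\Delta\cup Y_\Delta$. Then every edge of $M$ has an endpoint in $X_\Delta\cup Y_\Delta$; and if that endpoint lies in $X_\Delta$, then by the first remark the other endpoint lies in $Y_\Delta$. Hence every edge of $M$ has its $Y$-endpoint in $Y_\Delta$, and since $M$ saturates $Y_\Delta$ we obtain the crucial equality $V(M)\cap Y=Y_\Delta$: the matching meets exactly the maximum-degree vertices on the $Y$-side.

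Now set $G'=G-M$. Each vertex of $X_\Delta\cup Y_\Delta$ loses precisely one incident edge, so has degree $\Delta-1$ in $G'$, every other vertex had $G$-degree at most $\Delta-1$, and $Y_\Delta\neq\emptyset$; therefore $\Delta(G')=\Delta-1$. Moreover $G'$ still satisfies the hypothesis: for an edge $xy\in E(G')$, if $d_G(x)<d_G(y)$ then $d_{G'}(x)\le d_G(x)\le d_G(y)-1\le d_{G'}(y)$; and if $d_G(x)=d_G(y)$, then either $y\notin V(M)$, whence $d_{G'}(y)=d_G(y)\ge d_G(x)\ge d_{G'}(x)$, or $y\in V(M)\cap Y=Y_\Delta$, which forces $d_G(x)=d_G(y)=\Delta$, hence $x\in X_\Delta\subseteq V(M)$ and $d_{G'}(x)=\Delta-1=d_{G'}(y)$. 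By the inductive hypothesis, applied to each connected component of $G'$, there is a colouring $\varphi'\in\alpha(G',\Delta-1)$ that is persistent-interval on $Y$. Define $\varphi_0:E(G)\rightarrow[1,\Delta]$ by $\varphi_0(e)=\varphi'(e)$ for $e\in E(G')$ and $\varphi_0(e)=\Delta$ for $e\in M$. This $\varphi_0$ is proper (at any vertex the $G'$-edges keep their distinct colours from $[1,\Delta-1]$ and at most one further incident edge carries colour $\Delta$), it uses all of $1,\dots,\Delta$ (since $\varphi'$ uses $1,\dots,\Delta-1$ and $M\neq\emptyset$), and it is persistent-interval on $Y$: a vertex $y$ with $d_G(y)=\Delta$ lies in $V(M)$, so $S_G(y,\varphi_0)=[1,\Delta-1]\cup\{\Delta\}=[1,d_G(y)]$, while a vertex $y$ with $d_G(y)<\Delta$ is untouched by $M$, so $S_G(y,\varphi_0)=S_{G'}(y,\varphi')=[1,d_{G'}(y)]=[1,d_G(y)]$.

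The genuinely delicate point — and the reason for choosing $M$ minimal — is exactly the identity $V(M)\cap Y=Y_\Delta$. Finding a matching saturating the maximum-degree vertices is automatic (a colour class of a König colouring), but if that matching were allowed to cover a vertex of $Y$ of degree $<\Delta$, the hypothesis could break down in $G'$ and the induction would collapse; the bipartite structure together with $N(X_\Delta)\subseteq Y_\Delta$ is what keeps the deleted matching "compatible" with the degree condition, and this is the step I expect to need the most care.
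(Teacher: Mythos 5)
Your argument is correct, and it is worth noting that the paper itself gives no proof of this statement --- it is quoted with references to Asratian's dissertation and the Asratian--Kamalian papers --- so your write-up supplies a self-contained proof where the paper only cites. Your route is the classical one for this result: induct, peel off a matching that will carry the top colour $\Delta(G)$, and recolour the rest with $\Delta(G)-1$ colours. All the delicate points are handled properly: a colour class of a K\"onig $\Delta$-edge-colouring saturates all vertices of maximum degree; passing to a minimal such matching $M$ forces every edge of $M$ to have an endpoint of maximum degree, and since $N(X_\Delta)\subseteq Y_\Delta$ this yields $V(M)\cap Y=Y_\Delta$, which is exactly what keeps the hypothesis $d(x)\le d(y)$ valid in $G-M$ (your case analysis for edges with $d_G(x)=d_G(y)$ is the crucial check, and it is right). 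The remaining verifications --- $\Delta(G-M)=\Delta(G)-1$ because $Y_\Delta\neq\emptyset$, applying the induction componentwise, the component containing a $Y_\Delta$-vertex guaranteeing that all colours $1,\dots,\Delta(G)-1$ are actually used, and the spectrum computation $[1,\Delta-1]\cup\{\Delta\}$ at $Y_\Delta$-vertices versus untouched spectra elsewhere --- are all sound. One could alternatively prove the theorem without induction by taking an arbitrary proper $\Delta(G)$-colouring and repairing the spectra of the $Y$-vertices by alternating-path recolourings (in the spirit of the paper's own proof of its Theorem on extending persistence to $\{x_0\}\cup Y$), but your matching-deletion induction is the standard and arguably cleaner argument.
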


\begin{theorem}
Suppose $G$ is a bipartite graph with bipartition $(X,Y)$, and there
exists a coloring $\varphi_0\in\alpha(G,\Delta(G))$ which is
persistent-interval on $Y$. Then, for an arbitrary vertex $x_0\in
X$, there exists $\psi\in\alpha(G,\Delta(G))$ which is
persistent-interval on $\{x_0\}\cup Y$.
\end{theorem}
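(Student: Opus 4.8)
The plan is to transform $\varphi_0$ into the desired $\psi$ by a finite sequence of Kempe-chain recolorings, each of which stays in $\alpha(G,\Delta(G))$, keeps the coloring persistent-interval on $Y$, and strictly decreases the potential $\sigma(\varphi)\equiv\sum_{c\in S_G(x_0,\varphi)}c$. Write $d\equiv d_G(x_0)$. Every spectrum that arises has exactly $d$ elements, all positive integers, so $\sigma(\varphi)\ge d(d+1)/2$ always, with equality precisely when $S_G(x_0,\varphi)=[1,d]$; hence as soon as $\sigma$ can no longer be decreased we are done. So it suffices to prove the single-step claim: if $\beta\in\alpha(G,\Delta(G))$ is persistent-interval on $Y$ and $S_G(x_0,\beta)\neq[1,d]$, then there is $\beta'\in\alpha(G,\Delta(G))$, persistent-interval on $Y$, with $\sigma(\beta')<\sigma(\beta)$.

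For the single step: since $|S_G(x_0,\beta)|=d$ but $S_G(x_0,\beta)\neq[1,d]$, its maximum $b\equiv\max S_G(x_0,\beta)$ satisfies $b\ge d+1$, and consequently some $a\in[1,d]\setminus S_G(x_0,\beta)$ exists; thus $a<b$, $b\in S_G(x_0,\beta)$, $a\notin S_G(x_0,\beta)$. Let $H$ be the spanning subgraph of $G$ consisting of the edges colored $a$ or $b$ by $\beta$. In a proper edge coloring each vertex meets at most one $a$-edge and at most one $b$-edge, so $H$ has maximum degree at most $2$ and every component of $H$ is a path or a cycle; since $x_0$ meets a $b$-edge but no $a$-edge, $x_0$ is an endpoint of a path component $P$ of $H$. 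I would get $\beta'$ from $\beta$ by interchanging the colors $a$ and $b$ on all edges of $P$ and leaving every other edge unchanged. Properness of $\beta'$ is the standard Kempe-chain fact, and $S_G(x_0,\beta')=(S_G(x_0,\beta)\setminus\{b\})\cup\{a\}$, so $\sigma(\beta')=\sigma(\beta)-b+a<\sigma(\beta)$.

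Two things remain, and the second is the crux. First, $\beta'$ must still use all colors $1,\ldots,\Delta(G)$: colors other than $a,b$ are untouched, and along $P$ the recoloring only permutes which edges carry $a$ and which carry $b$, so the multiset of colors used in $G$ is unchanged provided $P$ has as many $a$-edges as $b$-edges, which follows once $P$ is known to have even length. Second, $\beta'$ must be persistent-interval on $Y$: a vertex of $Y$ not on $P$ is untouched, and a vertex of $Y$ that is an internal vertex of $P$ keeps one $a$-edge and one $b$-edge, hence keeps its spectrum, so the only danger is an endpoint of $P$ lying in $Y$. Here is the key point: trace the colors along $P$ starting from $x_0$, whose first $P$-edge is its $b$-edge; consecutive edges of $P$ are colored $b,a,b,a,\ldots$, so the last edge of $P$ is colored $b$ exactly when $P$ has odd length, which — since $G$ is bipartite and $x_0\in X$ — is exactly when the far endpoint $z$ of $P$ lies in $Y$. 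But if $z\in Y$, then $z$ meets a $b$-edge and no $a$-edge, so $b\in S_G(z,\beta)=[1,d_G(z)]$ while $a\notin[1,d_G(z)]$, forcing $b\le d_G(z)<a$ and contradicting $a<b$. Hence $z\in X$ and $P$ has even length (which also supplies the fact needed above), no vertex of $Y$ is an endpoint of $P$, and $\beta'$ is persistent-interval on $Y$. Iterating the step finitely many times — the potential is a strictly decreasing positive integer — yields $\psi\in\alpha(G,\Delta(G))$ with $S_G(x_0,\psi)=[1,d]=[1,d_G(x_0)]$ still persistent-interval on $Y$, i.e., persistent-interval on $\{x_0\}\cup Y$. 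The main obstacle is precisely the parity/contradiction argument identifying $z\in X$; everything else is bookkeeping.
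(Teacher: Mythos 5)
Your proof is correct and follows essentially the same route as the paper: repeated Kempe-chain interchanges along the maximal alternating path starting at $x_0$, swapping a spectrum color above $d_G(x_0)$ with a missing color from $[1,d_G(x_0)]$, the only cosmetic differences being your choice of the pair $(a,b)$ and the potential-function bookkeeping in place of the paper's fixed count of $\nu_0$ steps. In fact you spell out the parity/contradiction argument showing the path cannot end in $Y$ (hence the coloring stays persistent-interval on $Y$ and all $\Delta(G)$ colors remain used), which is exactly the point the paper leaves as ``it is not difficult to see.''
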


\begin{proof}
\case{1} $S_G(x_0,\varphi_0)=[1,d_G(x_0)]$. In this case $\psi$ is
$\varphi_0$.

\case{2} $S_G(x_0,\varphi_0)\neq [1,d_G(x_0)]$.

Clearly, $[1,d_G(x_0)]\backslash S_G(x_0,\varphi_0)\neq\emptyset$,
$S_G(x_0,\varphi_0)\backslash [1,d_G(x_0)]\neq\emptyset$. Since
$|S_G(x_0,\varphi_0)|=|[1,d_G(x_0)]|=d_G(x_0)$, there exists
$\nu_0\in[1,d_G(x_0)]$ satisfying the condition
$|[1,d_G(x_0)]\backslash
S_G(x_0,\varphi_0)|=|S_G(x_0,\varphi_0)\backslash
[1,d_G(x_0)]|=\nu_0$.

Now let us construct the sequence
$\Theta_0,\Theta_1,\ldots,\Theta_{\nu_0}$ of proper edge
$\Delta(G)$-colorings of the graph $G$, where for any $i\in
[0,\nu_0]$, $\Theta_i$ is persistent-interval on $Y$.

Set $\Theta_0\equiv\varphi_0$.

Suppose that for some $k\in [0,\nu_0-1]$, the subsequence
$\Theta_0,\Theta_1,\ldots,\Theta_k$ is already constructed.

Let
$$
t_k\equiv\max(S_G(x_0,\Theta_k)\backslash [1,d_G(x_0)]),
$$
$$
s_k\equiv\min([1,d_G(x_0)]\backslash S_G(x_0,\Theta_k)).
$$

Clearly, $t_k>s_k$. Consider the path $P(k)$ in the graph $G$ of
maximum length with the initial vertex $x_0$ whose edges are
alternatively colored by the colors $t_k$ and $s_k$. Let
$\Theta_{k+1}$ is obtained from $\Theta_k$ by interchanging the two
colors $t_k$ and $s_k$ along $P(k)$.

It is not difficult to see that $\Theta_{\nu_0}$ is
persistent-interval on $\{x_0\}\cup Y$. Set
$\psi\equiv\Theta_{\nu_0}$.
\end{proof}

\begin{corollary}
Let $G$ be a bipartite graph with bipartition $(X,Y)$ where
$d_G(x)\leq d_G(y)$ for each edge $(x,y)\in E(G)$ with $x\in X$ and
$y\in Y$. Let $x_0$ be an arbitrary vertex of $X$. Then there exists
a coloring $\varphi_0\in\alpha(G,\Delta(G))$ which is
persistent-interval on $\{x_0\}\cup Y$.
\end{corollary}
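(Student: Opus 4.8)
The plan is to obtain the claimed coloring by chaining the two theorems that immediately precede this corollary; no new combinatorial construction is needed. First I would note that the degree hypothesis ``$d_G(x)\leq d_G(y)$ for each edge $(x,y)\in E(G)$ with $x\in X$ and $y\in Y$'' is precisely the hypothesis of the theorem attributed to \cite{Asratian14,Oranj6,Asratian8}. Since it holds by assumption, that theorem applies verbatim and furnishes a coloring $\varphi_1\in\alpha(G,\Delta(G))$ which is persistent-interval on $Y$.

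Next I would feed $\varphi_1$ into the theorem stated just above this corollary. Its hypothesis is only the existence of \emph{some} coloring in $\alpha(G,\Delta(G))$ that is persistent-interval on $Y$ — crucially, it imposes no degree restriction — and $\varphi_1$ is exactly such a coloring. Invoking that theorem with the prescribed vertex $x_0\in X$ then yields a coloring $\psi\in\alpha(G,\Delta(G))$ which is persistent-interval on $\{x_0\}\cup Y$. Setting $\varphi_0\equiv\psi$ completes the argument.

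The only point that requires any care is the compatibility of the two hypotheses, and this is immediate: the second theorem asks for nothing beyond the conclusion of the first. Consequently I do not expect any genuine obstacle here; the entire substance of the corollary has already been discharged by the two preceding theorems, and the proof is purely a composition of them.
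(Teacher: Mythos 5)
Your proposal is correct and is exactly the argument the paper intends: the corollary follows immediately by first applying the cited theorem (which uses the degree hypothesis to produce a coloring in $\alpha(G,\Delta(G))$ persistent-interval on $Y$) and then feeding that coloring into the preceding theorem to extend persistence to $\{x_0\}\cup Y$. No further comment is needed.
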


\begin{corollary} \label{Cor5} \cite{Shved2_}
Let $G$ be a bipartite graph with bipartition $(X,Y)$ where
$d_G(x)\leq d_G(y)$ for each edge $(x,y)\in E(G)$ with $x\in X$ and
$y\in Y$. Then $\eta_{pi}(G)\geq 1+|Y|$.
\end{corollary}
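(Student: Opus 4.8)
This corollary is an immediate consequence of the two preceding results. The hypothesis on $G$ (namely $d_G(x)\le d_G(y)$ for every edge $(x,y)$ with $x\in X$, $y\in Y$) is exactly the hypothesis of the theorem attributed to \cite{Asratian14,Oranj6,Asratian8}, which produces a coloring $\varphi_0\in\alpha(G,\Delta(G))$ that is persistent-interval on $Y$. Feeding that $\varphi_0$ into the preceding theorem (or, equivalently, invoking the corollary just above, which already packages both steps) yields, for any fixed $x_0\in X$, a coloring $\psi\in\alpha(G,\Delta(G))$ that is persistent-interval on $\{x_0\}\cup Y$. The plan is therefore to carry out this chaining and then simply count vertices.

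\textbf{Steps.} First I would fix an arbitrary vertex $x_0\in X$; such a vertex exists because $\alpha(G,\Delta(G))\neq\emptyset$ forces $E(G)\neq\emptyset$, whence both parts of the bipartition are nonempty. Next I would apply the preceding corollary to obtain $\psi\in\alpha(G,\Delta(G))\subseteq\alpha(G)$ which is persistent-interval on $\{x_0\}\cup Y$, i.e. $\psi$ is persistent-interval for $x_0$ and for every $y\in Y$. Then I would observe that these are pairwise distinct vertices: $Y$ has $|Y|$ elements and $x_0\in X$ lies outside $Y$, so the set of vertices for which $\psi$ is persistent-interval has size at least $1+|Y|$. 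Hence $f_{G,pi}(\psi)\ge 1+|Y|$, and by the definition
$$\eta_{pi}(G)\equiv\max_{\varphi\in\alpha(G)}f_{G,pi}(\varphi)\ge f_{G,pi}(\psi)\ge 1+|Y|,$$
which is the claim.

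\textbf{Main obstacle.} There is essentially no obstacle internal to this corollary: all the genuine work sits in the two theorems it cites — in particular the alternating-path recoloring argument (the construction of the sequence $\Theta_0,\dots,\Theta_{\nu_0}$) used to adjoin $x_0$ to the persistent-interval set while preserving the persistent-interval property on all of $Y$. The only point requiring a line of care is the trivial one just noted, namely that $x_0$ does not already lie in $Y$, so that the $1$ and the $|Y|$ genuinely add; this is immediate from $(X,Y)$ being a bipartition. Thus the proof is a short two-step deduction followed by a counting remark.
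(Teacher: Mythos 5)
Your proposal is correct and matches the paper's (implicit) argument exactly: the paper states this corollary without proof precisely because it follows by fixing $x_0\in X$, invoking the preceding corollary to get $\psi\in\alpha(G,\Delta(G))$ persistent-interval on $\{x_0\}\cup Y$, and counting $f_{G,pi}(\psi)\geq 1+|Y|$, so $\eta_{pi}(G)\geq 1+|Y|$. Your added remarks (nonemptiness of $X$ and $x_0\notin Y$) are fine and do not change the route.
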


\begin{remark}\label{Remark1}
Notice that the complete bipartite graph $K_{n+1,n}$ for an
arbitrary positive integer $n$ satisfies the conditions of Corollary
\ref{Cor5}. Is is not difficult to see that
$\eta_{pi}(K_{n+1,n})=1+n.$ It means that the bound obtained in
Corollary \ref{Cor5} is sharp since in this case $|Y|=n.$
\end{remark}

\begin{remark}\label{Remark2}
Let $G$ be a bipartite $(k-1,k)$-biregular graph with bipartition
$(X,Y)$, where $k\geq3$. Then the numbers $\frac{|X|}{k}$,
$\frac{|Y|}{k-1}$, and $\frac{|V(G)|}{2k-1}$ are integer. It follows
from the equalities $gcd(k-1,k)=1$ and
$|E(G)|=|X|\cdot(k-1)=|Y|\cdot k$.
\end{remark}

\begin{theorem}\label{Theorem5} \cite{Shved2_}
Let $G$ be a bipartite $(k-1,k)$-biregular graph, where $k\geq 4$.
Then
$$\eta_i(G)\geq\frac{k-1}{2k-1}\cdot|V(G)|+\Bigg\lceil\frac{k}{\big\lceil\frac{k}{2}\big\rceil\cdot(2k-1)}\cdot|V(G)|\Bigg\rceil.$$
\end{theorem}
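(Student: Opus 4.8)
The plan is to leverage the theorem quoted just above (for a bipartite graph in which $d_G(x)\le d_G(y)$ holds along every edge there is a coloring in $\alpha(G,\Delta(G))$ that is persistent-interval on $Y$): since $G$ is $(k-1,k)$-biregular we have $d_G(x)=k-1\le k=d_G(y)=\Delta(G)$ for each edge $(x,y)$ with $x\in X$, $y\in Y$, so there is $\varphi_0\in\alpha(G,k)$ which is persistent-interval on $Y$. Thus in $\varphi_0$ every one of the $|Y|$ vertices of $Y$ already has the $d_G$-interval spectrum $[1,k]$, and it only remains to produce, by modifying $\varphi_0$, a coloring that is interval on enough vertices of $X$ without spoiling the spectra on $Y$. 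Because $d_G(x)=k-1$ and the colors lie in $[1,k]$, each $x\in X$ omits in $\varphi_0$ exactly one color; for $j\in[1,k]$ set $X_j\equiv\{x\in X:\ j\notin S_G(x,\varphi_0)\}$, so that $X=\bigcup_{j=1}^{k}X_j$ is a partition. From $|E(G)|=|X|(k-1)=|Y|k$ and $|V(G)|=|X|+|Y|$ (cf.\ Remark~\ref{Remark2}) one gets $|X|=\frac{k}{2k-1}\,|V(G)|$ and $|Y|=\frac{k-1}{2k-1}\,|V(G)|$, so the asserted inequality is exactly $\eta_i(G)\ge |Y|+\lceil |X|/\lceil k/2\rceil\rceil$.

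The engine is cyclic relabelling of colors. For any permutation $\sigma$ of $[1,k]$, $\sigma\circ\varphi_0\in\alpha(G,k)$, and since every $y\in Y$ has the full spectrum $[1,k]$ its spectrum is untouched, so $\sigma\circ\varphi_0$ is still persistent-interval on $Y$; on the other hand a vertex of $X_j$ now omits the color $\sigma(j)$, hence becomes interval exactly when $\sigma(j)\in\{1,k\}$. I would therefore partition the color set $[1,k]$ into $\lceil k/2\rceil$ blocks --- the consecutive pairs $\{2i-1,2i\}$ for $i=1,\dots,\lfloor k/2\rfloor$, together with the singleton $\{k\}$ when $k$ is odd. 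For a pair block $\{2i-1,2i\}$ there is a cyclic shift $\sigma$ with $\sigma(2i-1)=k$ and $\sigma(2i)=1$ (for instance $\sigma(j)\equiv((j-2i)\bmod k)+1$); then $\sigma\circ\varphi_0$ is persistent-interval on $Y$ and interval on $X_{2i-1}\cup X_{2i}$, the two spectra becoming $[1,k-1]$ and $[2,k]$. For the singleton block $\{k\}$ nothing has to be changed: every $x\in X_k$ already has spectrum $[1,k-1]$, so $\varphi_0$ itself is interval on $X_k$ and persistent-interval on $Y$. In this way, with each of the $\lceil k/2\rceil$ blocks $B$ we associate a coloring $\varphi_B\in\alpha(G,k)$ that is interval on the whole set $Y\cup\bigcup_{j\in B}X_j$.

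To conclude I would apply the pigeonhole principle to the partition of $X$ into these $\lceil k/2\rceil$ blocks: some block $B_0$ satisfies $\sum_{j\in B_0}|X_j|\ge\lceil |X|/\lceil k/2\rceil\rceil$. Since $Y$ and $\bigcup_{j\in B_0}X_j$ are disjoint and $\varphi_{B_0}$ is interval on their union,
\[
\eta_i(G)\ \ge\ f_{G,i}(\varphi_{B_0})\ \ge\ |Y|+\left\lceil\frac{|X|}{\big\lceil\frac{k}{2}\big\rceil}\right\rceil\ =\ \frac{k-1}{2k-1}\cdot|V(G)|+\Bigg\lceil\frac{k}{\big\lceil\frac{k}{2}\big\rceil\cdot(2k-1)}\cdot|V(G)|\Bigg\rceil,
\]
which is the claimed bound.

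The only step that is not pure routine is the bookkeeping of the middle paragraph: seeing that one cyclic relabelling simultaneously ``repairs'' two missing-color classes $X_{2i-1},X_{2i}$ of $X$ (turning their spectra into $[1,k-1]$ and $[2,k]$) while leaving all of $Y$ persistent-interval, and organising $[1,k]$ into the right $\lceil k/2\rceil$ blocks so that the pigeonhole count over $X$ comes out as $\lceil |X|/\lceil k/2\rceil\rceil$. The rest --- the degree count defining the $X_j$'s, the invariance of the $Y$-spectra under relabelling, and the conversion of $|X|$ and $|Y|$ into the stated multiples of $|V(G)|$ via Remark~\ref{Remark2} --- is entirely straightforward.
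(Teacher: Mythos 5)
Your proof is correct and follows essentially the same route as the paper's: partition $X$ into the missing-color classes, group the colors into $\big\lceil\frac{k}{2}\big\rceil$ blocks (consecutive pairs plus the singleton $\{k\}$ when $k$ is odd), pick the largest block by pigeonhole, and apply a cyclic relabelling sending the two missing colors to $k$ and $1$, which leaves the degree-$k$ side untouched. The only cosmetic difference is that you invoke the persistent-interval-on-$Y$ theorem to obtain the starting coloring, whereas the paper simply takes an arbitrary $\beta\in\alpha(G,k)$ (using $\chi'(G)=\Delta(G)=k$ for bipartite $G$), since every vertex of $Y$ automatically has spectrum $[1,k]$ in any proper edge $k$-coloring.
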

\begin{proof}
Suppose that $(X,Y)$ is a bipartition of $G$. Clearly,
$\chi'(G)=\Delta(G)=k$. Suppose that $\beta\in\alpha(G,k)$. For any
$j\in[1,k]$, define:
$$V_{G,\beta,j}\equiv\{x\in X/j\not\in S_G(x,\beta)\}.$$

For arbitrary integers $j',j'',$ where $1\leq j'< j''\leq k,$ we
have
$$V_{G,\beta,j'}\cap V_{G,\beta,j''}=\emptyset$$
and

$$ \bigcup_{j=1}^{k}V_{G,\beta,j}=X.$$

For any $i\in[1,\lceil\frac{k}{2}\rceil]$, let us define the subset
$V(G,\beta,i)$ of the set $X$ as follows:
$$
V(G,\beta,i)\equiv\left\{
\begin{array}{ll}
V_{G,\beta,2i-1}\cup V_{G,\beta,2i}, & \textrm{if $k$ is odd and $i\in[1,\frac{k-1}{2}]$}\\
 & \textrm{or $k$ is even and $i\in[1,\frac{k}{2}]$,}\\
V_{G,\beta,k}, & \textrm{if $k$ is odd and
$i=\frac{1+k}{2}$.}\\
\end{array}
\right.
$$

For arbitrary integers $i',i'',$ where $1\leq i'< i''\leq
\big\lceil\frac{k}{2}\big\rceil,$ we have
$$V(G,\beta,i')\cap V(G,\beta,i'')=\emptyset$$
and
$$
\bigcup_{i=1}^{\big\lceil\frac{k}{2}\big\rceil}V(G,\beta,i)=X.$$

Hence, there exists $i_0\in
\big[1,\big\lceil\frac{k}{2}\big\rceil\big]$ for which
$$|V(G,\beta,i_0)|\geq\Bigg\lceil\frac{|X|}{\big\lceil\frac{k}{2}\big\rceil}
\Bigg\rceil.$$

Set $R_0\equiv Y\cup V(G,\beta,i_0).$

It is not difficult to verify that
$$|R_0|\geq\frac{k-1}{2k-1}\cdot|V(G)|+\Bigg\lceil\frac{k}{\big\lceil\frac{k}{2}\big\rceil\cdot(2k-1)}\cdot|V(G)|\Bigg\rceil.$$

\case{1} $i_0=\big\lceil\frac{k}{2}\big\rceil$.

\case{1.a} $k$ is odd.

Clearly, $\beta$ is interval on $R_0$.

\case{1.b} $k$ is even.

Define a function $\varphi:E(G)\rightarrow [1,k]$. For any $e\in
E(G)$, set:
$$
\varphi(e)\equiv\left\{
\begin{array}{ll}
(\beta(e)+1)(\bmod{k}), & \textrm{if $\beta(e)\neq k-1$,}\\
k, & \textrm{if $\beta(e)=k-1$.}\\
\end{array}
\right.
$$

It is not difficult to see that $\varphi\in\alpha(G,k)$ and
$\varphi$ is interval on $R_0$.

\case{2} $i_0\in\big[1,\big\lceil\frac{k}{2}\big\rceil-1\big].$

Define a function $\varphi:E(G)\rightarrow [1,k]$. For any $e\in
E(G)$, set:
$$
\varphi(e)\equiv\left\{
\begin{array}{ll}
(\beta(e)+1+k-2i_0)(\bmod{k}), & \textrm{if $\beta(e)\neq 2i_0-1$,}\\
k, & \textrm{if $\beta(e)=2i_0-1$.}\\
\end{array}
\right.
$$

It is not difficult to see that $\varphi\in\alpha(G,k)$ and
$\varphi$ is interval on $R_0$.
\end{proof}

\begin{corollary}\label{Corollary4} \cite{Shved2_}
Let $G$ be a bipartite $(k-1,k)$-biregular graph, where $k$ is even
and $k\geq 4$. Then
$$\eta_i(G)\geq\frac{k+1}{2k-1}\cdot|V(G)|.$$
\end{corollary}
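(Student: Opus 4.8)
The plan is to derive this directly from Theorem~\ref{Theorem5} together with the divisibility observation of Remark~\ref{Remark2}. Since $k$ is even and $k\geq4$, the hypotheses of Theorem~\ref{Theorem5} are satisfied, so I would start by writing down the bound it provides,
$$\eta_i(G)\geq\frac{k-1}{2k-1}\cdot|V(G)|+\Bigg\lceil\frac{k}{\big\lceil\frac{k}{2}\big\rceil\cdot(2k-1)}\cdot|V(G)|\Bigg\rceil,$$
and then simplify the argument of the ceiling. Because $k$ is even, $\big\lceil\frac{k}{2}\big\rceil=\frac{k}{2}$, so the coefficient $\frac{k}{\lceil k/2\rceil\cdot(2k-1)}$ collapses to $\frac{2}{2k-1}$, and the ceiling term becomes $\big\lceil\frac{2}{2k-1}\cdot|V(G)|\big\rceil$.

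Next I would remove the ceiling. By Remark~\ref{Remark2}, $\frac{|V(G)|}{2k-1}$ is an integer (here I use $k\geq3$, which holds since $k\geq4$), hence $\frac{2}{2k-1}\cdot|V(G)|=2\cdot\frac{|V(G)|}{2k-1}$ is also an integer, and therefore $\big\lceil\frac{2}{2k-1}\cdot|V(G)|\big\rceil=\frac{2}{2k-1}\cdot|V(G)|$.

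Combining the two displayed facts gives
$$\eta_i(G)\geq\frac{k-1}{2k-1}\cdot|V(G)|+\frac{2}{2k-1}\cdot|V(G)|=\frac{k+1}{2k-1}\cdot|V(G)|,$$
which is the claimed inequality. There is essentially no obstacle here: the only substantive point is invoking Remark~\ref{Remark2} to see that the ceiling is attained exactly, after which the conclusion is a one-line arithmetic simplification of the bound from Theorem~\ref{Theorem5}.
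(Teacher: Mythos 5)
Your derivation is correct and is exactly the intended route: the paper states this as an immediate consequence of Theorem~\ref{Theorem5}, with the ceiling evaluated exactly via $\big\lceil\frac{k}{2}\big\rceil=\frac{k}{2}$ for even $k$ and the integrality of $\frac{|V(G)|}{2k-1}$ from Remark~\ref{Remark2}. Nothing is missing.
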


\begin{corollary}\label{Corollary55} \cite{Shved2_}
Let $G$ be a bipartite $(3,4)$-biregular graph. Then there exists a
coloring from $\alpha(G,4)$ which is interval for at least
$\frac{5}{7}|V(G)|$ vertices of $G$.
\end{corollary}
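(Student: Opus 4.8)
The plan is to read off this statement as the case $k=4$ of Corollary~\ref{Corollary4} (equivalently, of Theorem~\ref{Theorem5}), and then to check the two purely cosmetic points that separate the present formulation from that one: that the numerical constant comes out to $\frac{5}{7}$, and that the witnessing coloring actually uses only $4$ colors.

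First I would note that a $(3,4)$-biregular bipartite graph is exactly a $(k-1,k)$-biregular bipartite graph with $k=4$, and that $k=4$ is even and satisfies $k\geq 4$, so Corollary~\ref{Corollary4} applies and yields $\eta_i(G)\geq\frac{k+1}{2k-1}\cdot|V(G)|=\frac{5}{7}\cdot|V(G)|$. To make the constant transparent I would recall the construction behind Theorem~\ref{Theorem5}: since $\chi'(G)=\Delta(G)=k=4$, pick any $\beta\in\alpha(G,4)$, partition $X$ according to the color missing at each vertex into $V_{G,\beta,1},\dots,V_{G,\beta,4}$, group these in pairs into the $\lceil k/2\rceil=2$ sets $V(G,\beta,1),V(G,\beta,2)$, and pigeonhole to obtain $i_0$ with $|V(G,\beta,i_0)|\geq\lceil|X|/2\rceil$. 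Setting $R_0\equiv Y\cup V(G,\beta,i_0)$ and writing $|X|=4t$, $|Y|=3t$, $|V(G)|=7t$ (Remark~\ref{Remark2}) one gets $|R_0|\geq 3t+2t=5t=\frac{5}{7}|V(G)|$; this is exactly the arithmetic simplification $\frac{k+1}{2k-1}=\frac{5}{7}$ at $k=4$, the ceiling in the second summand of the bound of Theorem~\ref{Theorem5} disappearing because $\lceil k/2\rceil=k/2$ when $k$ is even.

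Finally I would observe that in every case of the proof of Theorem~\ref{Theorem5} the coloring $\varphi$ that is interval on $R_0$ is either $\beta$ itself or a recoloring of $\beta$ obtained by a cyclic shift of the color set $\{1,\dots,k\}$ composed with a transposition fixing one color; in all cases $\varphi\in\alpha(G,k)=\alpha(G,4)$. Hence $\varphi$ is a proper edge $4$-coloring of $G$ that is interval for at least $\frac{5}{7}|V(G)|$ vertices, as claimed. I do not expect a genuine obstacle here: the whole content sits in Theorem~\ref{Theorem5}, and the only things to verify are the identity $\frac{k+1}{2k-1}=\frac{5}{7}$ at $k=4$ and that the witnessing coloring uses exactly $4$ colors, both of which are immediate.
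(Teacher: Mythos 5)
Your proposal is correct and follows the paper's own route: Corollary~\ref{Corollary55} is just the case $k=4$ of Theorem~\ref{Theorem5} (equivalently Corollary~\ref{Corollary4}), with the arithmetic $|X|=4t$, $|Y|=3t$, $|V(G)|=7t$ giving $|R_0|\geq 5t=\frac{5}{7}|V(G)|$, and the witnessing coloring lying in $\alpha(G,4)$ because the constructions in the proof of Theorem~\ref{Theorem5} only permute the colors of a coloring $\beta\in\alpha(G,4)$. (Minor remark: the recolorings in Cases 1.b and 2 are simply cyclic shifts of $[1,k]$ written with representative $k$ instead of $0$, not a shift composed with a transposition, but this does not affect your argument.)
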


\begin{remark}
For an arbitrary bipartite graph $G$ with $\Delta(G)\leq 3$, there
exists an interval coloring of $G$ \cite{Hansen15,Giaro16,Giaro17}. Consequently, if $G$ is a bipartite $(2,3)$-biregular
graph, then $\eta_i(G)=|V(G)|$.
\end{remark}

\begin{remark}
Some sufficient conditions for existence of an interval coloring of
a $(3,4)$-biregular bipartite graph were obtained in
\cite{Asratian18,Asratian19,Piatkin20}.
\end{remark}

\begin{theorem}\label{Theorem6} \cite{Shved2_}
Let $G$ be a bipartite $(k-1,k)$-biregular graph, where $k\geq 3$.
Then
$$\eta_{pi}(G)\geq\frac{k}{2k-1}\cdot|V(G)|.$$
\end{theorem}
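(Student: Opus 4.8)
The plan is to combine the coloring $\varphi_0$ that is persistent-interval on $Y$, furnished by the theorem stated above for bipartite graphs $G$ with $d_G(x)\le d_G(y)$ along every edge, with a pigeonhole-and-recolor argument of the same type used in the proofs of Theorems~\ref{Theorem1} and \ref{Theorem5}.

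First I would fix a bipartition $(X,Y)$ of $G$ with $d_G(x)=k-1$ for $x\in X$ and $d_G(y)=k$ for $y\in Y$, so that $\Delta(G)=\chi'(G)=k$. Since every edge $(x,y)$ with $x\in X$, $y\in Y$ has $d_G(x)=k-1\le k=d_G(y)$, the hypothesis of that theorem holds, and I obtain $\varphi_0\in\alpha(G,k)$ persistent-interval on $Y$; hence $S_G(y,\varphi_0)=[1,k]$ for every $y\in Y$.

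Next, for each $j\in[1,k]$ set $V_{G,\varphi_0,j}\equiv\{x\in X:\ j\notin S_G(x,\varphi_0)\}$. Since $|S_G(x,\varphi_0)|=d_G(x)=k-1$ and $S_G(x,\varphi_0)\subseteq[1,k]$, each $x\in X$ omits exactly one color, so the sets $V_{G,\varphi_0,j}$, $j\in[1,k]$, are pairwise disjoint with union $X$; thus some $j_0\in[1,k]$ satisfies $|V_{G,\varphi_0,j_0}|\ge\lceil |X|/k\rceil$. Put $R_0\equiv Y\cup V_{G,\varphi_0,j_0}$. If $j_0=k$, then $\varphi_0$ is already persistent-interval on $R_0$, since each $x\in V_{G,\varphi_0,k}$ has $S_G(x,\varphi_0)=[1,k-1]$. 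If $j_0\le k-1$, I would let $\varphi$ be obtained from $\varphi_0$ by interchanging the two colors $j_0$ and $k$ on all edges of $G$; then $\varphi\in\alpha(G,k)$, every $y\in Y$ still has spectrum $[1,k]$, and every $x\in V_{G,\varphi_0,j_0}$ now has spectrum $[1,k-1]$, so $\varphi$ is persistent-interval on $R_0$.

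Finally I would carry out the arithmetic: from $|E(G)|=(k-1)|X|=k|Y|$ one gets $|X|=\frac{k}{2k-1}|V(G)|$ and $|Y|=\frac{k-1}{2k-1}|V(G)|$, whence
$$\eta_{pi}(G)\ \ge\ |R_0|\ =\ |Y|+|V_{G,\varphi_0,j_0}|\ \ge\ |Y|+\frac{|X|}{k}\ =\ \frac{k-1}{2k-1}|V(G)|+\frac{1}{2k-1}|V(G)|\ =\ \frac{k}{2k-1}|V(G)|.$$
I do not expect a serious obstacle here: the feature that makes this argument cleaner than those of Theorems~\ref{Theorem2} and \ref{Theorem5} is exactly that the vertices of $Y$ already use the full color set $[1,k]$, so \emph{any} permutation of the colors leaves them persistent-interval and no parity case distinction is needed in the recoloring step. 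The only points requiring a little care are verifying that the sets $V_{G,\varphi_0,j}$ genuinely partition $X$ and that $\lceil|X|/k\rceil\ge|X|/k$ (in fact $\lceil|X|/k\rceil=|X|/k$ by Remark~\ref{Remark2}) is enough to land exactly on the displayed bound.
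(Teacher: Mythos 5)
Your proof is correct and essentially identical to the paper's: the same pigeonhole argument on the unique color missed by each vertex of $X$, the same swap of colors $j_0$ and $k$, and the same count $|Y|+|X|/k=\frac{k}{2k-1}|V(G)|$. The only (harmless) difference is that you invoke the theorem producing a coloring persistent-interval on $Y$, whereas the paper simply starts from an arbitrary $\beta\in\alpha(G,k)$ — as you yourself observe, every $y\in Y$ has degree $k$ and hence automatically has spectrum $[1,k]$ in any proper edge $k$-coloring, so that appeal is superfluous.
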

\begin{proof}
Suppose that $(X,Y)$ is a bipartition of $G$. Clearly,
$\chi'(G)=\Delta(G)=k$. Suppose that $\beta\in\alpha(G,k)$.

For any $j\in[1,k]$, define:
$$V_{G,\beta,j}\equiv\{x\in X/j\not\in S_G(x,\beta)\}.$$

For arbitrary integers $j',j'',$ where $1\leq j'< j''\leq k,$ we
have
$$V_{G,\beta,j'}\cap V_{G,\beta,j''}=\emptyset$$
and
$$ \bigcup_{j=1}^{k}V_{G,\beta,j}=X.$$

Hence, there exists $j_0\in [1,k]$ for which
$$|V_{G,\beta,j_0}|\geq\frac{|X|}{k}.$$

Set $R_0\equiv Y\cup V_{G,\beta,j_0}.$

It is not difficult to verify that
$$|R_0|\geq\frac{k}{2k-1}\cdot|V(G)|.$$

\case{1} $j_0=k$.

Clearly, $\beta$ is persistent-interval on $R_0$.

\case{2} $j_0\in[1,k-1]$.

Define a function $\varphi:E(G)\rightarrow [1,k]$. For any $e\in
E(G)$, set:
$$
\varphi(e)\equiv\left\{
\begin{array}{ll}
\beta(e), & \textrm{if $\beta(e)\not\in \{j_0,k\}$}\\
j_0, & \textrm{if $\beta(e)=k$}\\
k, & \textrm{if $\beta(e)=j_0$.}
\end{array}
\right.
$$

It is not difficult to see that $\varphi\in\alpha(G,k)$ and
$\varphi$ is persistent-interval on $R_0$.
\end{proof}

\begin{corollary}\label{Corollary5} \cite{Shved2_}
Let $G$ be a bipartite $(3,4)$-biregular graph. Then there exists a
coloring from $\alpha(G,4)$ which is persistent-interval for at
least $\frac{4}{7}|V(G)|$ vertices of $G$.
\end{corollary}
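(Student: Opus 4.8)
The plan is to obtain this statement as the special case $k=4$ of Theorem~\ref{Theorem6}. First I would observe that a bipartite $(3,4)$-biregular graph is exactly a bipartite $(k-1,k)$-biregular graph with $k=4$, and that $4\geq 3$, so the hypothesis of Theorem~\ref{Theorem6} is satisfied. Also, since $G$ is bipartite and $(3,4)$-biregular, we have $\Delta(G)=4$ and $\chi'(G)=\Delta(G)=4$ (as already noted in the proof of Theorem~\ref{Theorem6}), hence $\alpha(G,4)=\alpha(G,\chi'(G))$.

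Next I would substitute $k=4$ into the inequality of Theorem~\ref{Theorem6}, which yields
$$\eta_{pi}(G)\geq\frac{k}{2k-1}\cdot|V(G)|=\frac{4}{7}\cdot|V(G)|.$$
Then I would unwind the definition $\eta_{pi}(G)\equiv\max_{\varphi\in\alpha(G)}f_{G,pi}(\varphi)$: the bound says there is a coloring that is persistent-interval for at least $\frac{4}{7}|V(G)|$ vertices. To match the stated conclusion, which asks specifically for a coloring from $\alpha(G,4)$, I would point to the construction carried out inside the proof of Theorem~\ref{Theorem6}: in both Case~1 ($j_0=k$, where $\beta$ itself works) and Case~2 ($j_0\in[1,k-1]$, where the swap map $\varphi$ is used), the witnessing coloring is exhibited as an element of $\alpha(G,k)=\alpha(G,4)$. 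Hence the coloring realizing the bound can be taken from $\alpha(G,4)$, as required.

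There is essentially no obstacle: the argument is a direct specialization plus a routine arithmetic substitution. The only point deserving a word of care is confirming that the coloring achieving the bound of Theorem~\ref{Theorem6} uses exactly $4$ colors, so that the phrase ``a coloring from $\alpha(G,4)$'' in the statement is literally justified; this is immediate from the fact that $\chi'(G)=\Delta(G)=4$ together with the explicit construction in Theorem~\ref{Theorem6}, which never introduces a color outside $[1,k]$.
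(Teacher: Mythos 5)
Your proposal is correct and matches the paper's intent exactly: the corollary is stated as an immediate consequence of Theorem~\ref{Theorem6} with $k=4$, giving $\frac{k}{2k-1}=\frac{4}{7}$, and the witnessing coloring constructed in that theorem's proof indeed lies in $\alpha(G,k)=\alpha(G,4)$ since $\chi'(G)=\Delta(G)=4$. No further argument is needed.
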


\section{Acknowledgment}

The author thanks professors A.S. Asratian and P.A. Petrosyan for
their attention to this work.

\end{document}